\newtheorem{lemma}{Lemma}
\newcommand{\la}{\langle}
\newcommand{\ra}{\rangle}
\newcommand{\aAb}{|\la a|A|b \ra|^2}
\renewcommand{\aa}{\acute{a}}
\newcommand{\bb}{\grave{b}}
\renewcommand{\rm}[1]{\textrm{#1}}
\newcommand{\half}{\frac{1}{2}}
\newcommand{\ep}[1]{^{(#1)}}
\newcommand{\mO}{\mathcal O}
\newcommand{\mS}{\mathcal S}
\newcommand{\m}[1]{\mathcal{#1}}
\newcommand{\ac}[1]{\acute{#1}}
\newcommand{\g}[1]{\grave{#1}}
\newcommand{\Aloc}{A_{\text{loc}}}
\newcommand{\Anonloc}{A_{\text{nonloc}}}
\newcommand{\supr}[1]{^{(#1)}}
\newcommand{\notrap}{NOTraP}
\begin{document}

\preprint{APS/123-QED}

\title{Improved resource-tunable near-term quantum algorithms for transition probabilities, with applications in physics and variational quantum linear algebra}%

\author{Nicolas PD Sawaya}
\email{nicolas.sawaya@intel.com}
\affiliation{%
Intel Labs, Santa Clara, California 95054, USA \\
}%

\author{Joonsuk Huh}%
\affiliation{Department of Chemistry, Sungkyunkwan University, Suwon 16419, Korea}
\affiliation{SKKU Advanced Institute of Nanotechnology (SAINT), Sungkyunkwan University, Suwon 16419, Korea}
\affiliation{Institute of Quantum Biophysics, Sungkyunkwan University, Suwon 16419, Korea}

\begin{abstract}
Transition amplitudes and transition probabilities are relevant to many areas of physics simulation, including the calculation of response properties and correlation functions. These quantities can also be related to solving linear systems of equations. 
Here we present three related algorithms for calculating transition probabilities. First, we extend a previously published short-depth algorithm, allowing for the two input states to be non-orthogonal. Building on this first procedure, we then derive a higher-depth algorithm based on Trotterization and Richardson extrapolation that requires fewer circuit evaluations. Third, we introduce a tunable algorithm that allows for trading off circuit depth and measurement complexity, yielding an algorithm that can be tailored to specific hardware characteristics. Finally, we implement proof-of-principle numerics for models in physics and chemistry and for a subroutine in variational quantum linear solving (VQLS). The primary benefits of our approaches are that (a) arbitrary non-orthogonal states may now be used with small increases in quantum resources, (b) we (like another recently proposed method) entirely avoid subroutines such as the Hadamard test that may require three-qubit gates to be decomposed, and (c) in some cases fewer quantum circuit evaluations are required as compared to the previous state-of-the-art in NISQ algorithms for transition probabilities.

\end{abstract}

\maketitle

\section{Introduction}

If breakthroughs in hardware design continue, quantum computers will be able to simulate quantum systems that are classically intractable, including those from condensed matter physics \cite{wecker15}, nuclear structure \cite{dumitrescu18}, high-energy physics \cite{bauer19_hep}, and chemistry and materials \cite{cao19_rev,mcardle20_rev}.
As theoretical and algorithmic work progresses, it is imperative to continue improving quantum computational primitives and subroutines so that calculations can be available on the earliest possible hardware.

One important subroutine is the calculation of transition amplitudes and probabilities, quantities closely related to the Fermi golden rule, response functions, and correlation functions more generally. These are required for calculating intensities in various areas of spectroscopy \cite{huh15,sawaya19_vibronic,kosugi20_linresp_chargespin,cai20_qmolecresp,ibe20_transition,sawaya2021ir} and for response functions in scattering experiments and condensed matter \cite{berne70_corrbook,florencio2020corrrev,Roggero19_linresp,chen2021vardyncorrfuncs}. Additionally, as vector-matrix-vector products $\vec a^t \textbf{A} \vec b$ (or $\vec a^t \textbf{A} \vec a$) are often relevant to classical linear algebra problems, transition probability subroutines may be useful in quantum linear algebra \cite{bravo19_vqls,gilyen2019qsvd,xu19_vari_linalg,endo20_vari_genproc,wang2021vqsvd,huang2021near}, including for classical partial differential equations \cite{berry2014pde,arrazola2019pde,liu2021pde}, finance \cite{orus2019finance,pistoia2021jpmc_finance}, and quantum machine learning \cite{liu20_qml,biamonte2017quantum,garcia2022qmlrev}.

In this work we introduce three algorithms for calculating transition probabilities. We refer to our algorithm framework as NOTraP, for non-orthogonal transition probabilities.

First, we show how to extend an existing short-depth quantum subroutine to allow for non-orthogonal states, at resource costs of just one additional qubit and at most four additional two-qubit gates. This short-depth algorithm  (NOTraP-SD) allows for a broader class of states and applications. As with previous related methods \cite{ibe20_transition,bravo19_vqls}, an important feature is that controlled unitaries (meaning known state preparation unitaries and unitaries for Pauli string exponentials) are not required for any algorithms in this work.

Second, we demonstrate a higher-depth method (NOTraP-HD) that greatly reduces the number of distinct quantum circuits that must be simulated. Third, we show how one may tune (NOTraP-T) between low circuit depths and a low number of distinct quantum circuits. This allows one to tailor the algorithm to a given set of hardware. For example, if one is given access to a quantum computer allowing for larger maximum circuit depth than before, one may modify the algorithm to use deeper circuits with the trade-off of having fewer circuit evaluations.

There are many problems for which states $|a\ra$ and $|b\ra$ are non-orthogonal, for example the calculation of Franck-Condon factors in molecules \cite{huh15,sawaya19_vibronic}, arbitrary correlation functions in condensed matter systems, or comparing two states in classical linear algebra. Other than this expansion of the types of states that may be used as input, the algorithms of this work also provide the benefits of tunable circuit depths and general applicability to linear algebra.

This work should be considered in the context of recent efforts to derive algorithms that eliminate the use of controlled-unitary circuits \cite{mitarai19_newhadamtest,ibe20_transition,huggins20_nonorthogonal,bravo19_vqls,huang2022linearoptical}, especially the Hadamard test. Since the native gate set of many quantum computers will consist solely of one- and two-qubit gates, such controlled state preparations require many three-qubit gates to be decomposed into simpler gates. Controlled state preparation unitaries thus would have lead to circuits several times deeper, making algorithms with many (decomposed) three-qubit gates prohibitive on most near-term quantum hardware.

\begin{figure}
    \centering
    \includegraphics[width=0.45\textwidth]{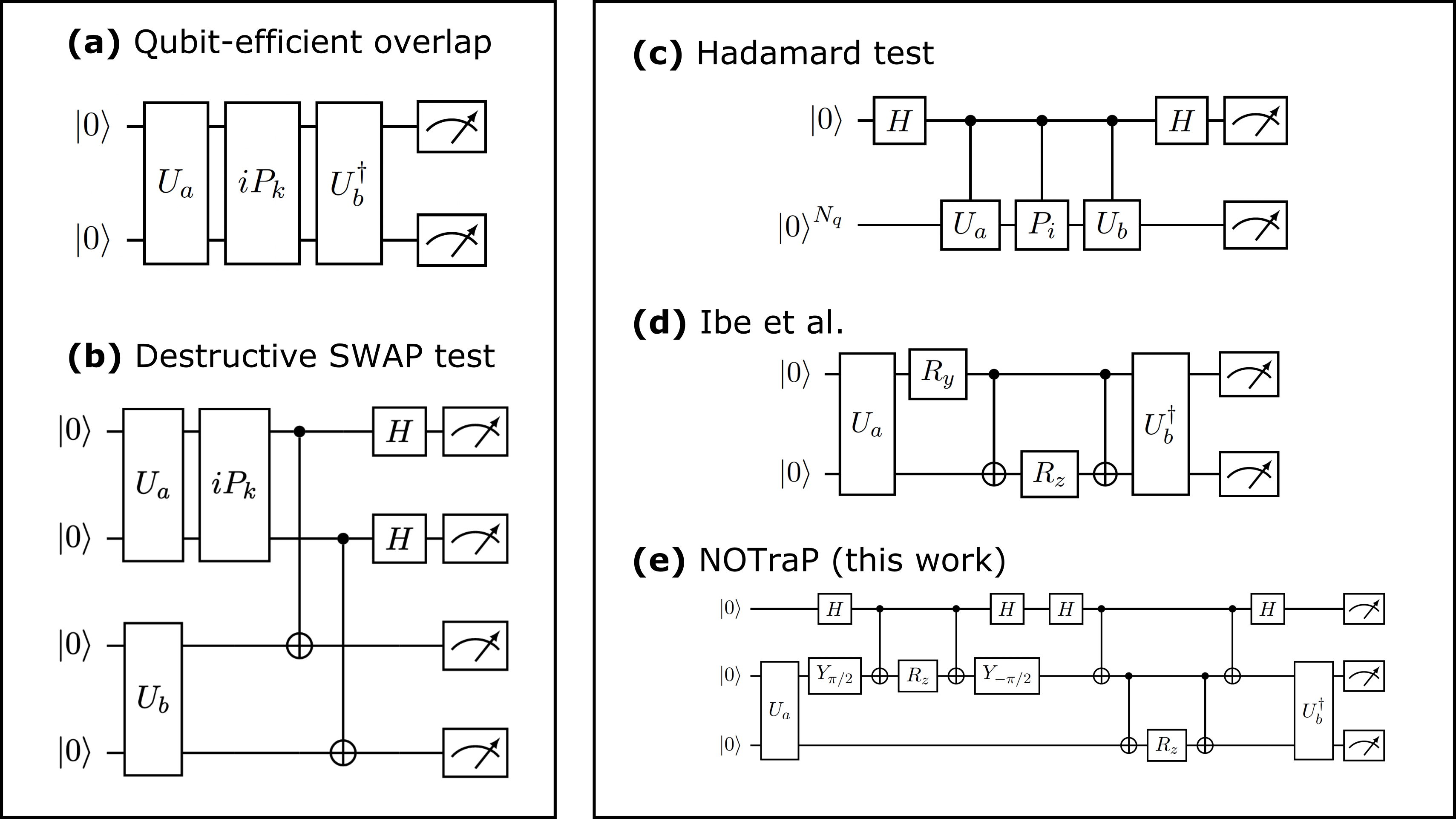}
    \caption{Left panel: Circuits for calculating quantities $|\la a|i P_k|b \ra|^2$ with respect to a single Pauli string $P_k$. 
    (a) Qubit-efficient overlap circuit, given state preparation unitaries $U_a|0\ra=|a\ra$ and $U_b|0\ra=|b\ra$. 
    (b) Use of the shorter-depth destructive SWAP test \cite{garcia13_destrswap}, requiring twice the qubits.
    Right panel: Circuits indirectly used in determining cross-terms between Pauli strings, using different methods. For illustrative purposes the bottom two circuits implement the cross-terms in operators
    $A = k_1 Y_1 + k_2 Z_1 Z_2$ and $\dot A = k_1 X_0 Y_1 + k_2 X_0 Z_1 Z_2$. 
    (c) The Hadamard test for determining $\Re \la a|P_i|b \ra$; a similar circuit yields $\Im \la a|P_i|b \ra$ \cite{knill07_hadtest,dobsicek07_hadtest}. Because the Hadamard test retains phase information, cross-terms $\la a|P_i|b \ra \la b|P_j|a \ra$ may then be determined classically via products between $\la a|P_i|b \ra$ and $\la b|P_j|a \ra$. This circuit uses controlled state preparation unitaries, and therefore is considerably deeper if gate decomposition to one- and two-qubit gates is required.  
    (d) Circuits used by Ibe \textit{et al.} \cite{ibe20_transition} to calculate terms $W_2$, $W_3$, and $W_4$ of equation \eqref{eq:w1234}. This method requires that $|a\ra$ and $|b\ra$ be orthogonal. 
    (e) Circuits for implementing NoTraP-SD, after transforming $\{A,|a\ra,|b\ra\} \rightarrow \{\dot A,|\aa\ra,|\bb\ra\}$ via equations \eqref{eq:abxform} and \eqref{eq:Adot}. Importantly, NoTraP allows for transition probabilities to be calculated (i) for arbitrary non-orthogonal states $|a\ra$ and $|b\ra$, and (ii) without controlled state preparation unitaries nor controlled Pauli rotations. }
    \label{fig:cross-circs}
\end{figure}

\begin{table}[h!]
\centering
\begin{tabular}{c c c c} 
 \hline
 Method & Circ Evals & $\m D[A_{loc}$]  & $\m D[A_{nonloc}$]  \\ [1ex]
 \hline
Hadam. test \cite{knill07_hadtest,chen2021vardyncorrfuncs}  &
$\m O(n_q^2)$ & $\m O(n_{\rm{Toff}})^*$ & $\m O(n_{\rm{Toff}} n_q)^*$ \\ 
original VQLS \cite{bravo19_vqls} & 
$\m O(n_q^2)$ & $\m O(n_{\rm{Toff}})$ & $\m O(n_{\rm{Toff}} n_q)$ \\
\notrap-SD \& Ibe \cite{ibe20_transition} &
$\m O(n_q^2)$ & $\m O(1)$ & $\m O(n_q)$ \\
\notrap-HD &
$\m O(n_{\tau} )$ & $\m O(n_q)$ & $\m O(n_q^2)$ \\
\notrap-T &
$\m O( \frac{n_\tau n_q^2}{N_G^2} )$ & $\m O( \frac{n_\tau n_q}{N_G} )$ & $\m O( \frac{n_\tau n_q^2}{N_G} )$ \\ [1ex] 
 \hline
\end{tabular}
\caption{Circuit depths and distinct quantum circuit evaluations for calculating transition probabilities with respect to $A_{loc}$ and $A_{nonloc}$. The asterisk ($^*$) denotes methods that require three-qubit gates and hence longer depths upon decomposition. While this table does not consider depth for the state preparation unitaries of $|a\ra$ and $|b\ra$, the Hadamard test is the only method in the table that requires controlled versions of the state preparation circuits for $|a\ra$ and $|b\ra$, greatly increasing the overall circuit depth on most current hardware. 
}
\label{tbl:meas_depth}
\end{table}

\section{Theory}\label{sec:theory}

The quantity of interest is the transition probability
\begin{equation}\label{eq:transamp}
|\la a|A|b \ra|^2
\end{equation}
where $|a\ra$ and $|b\ra$ are arbitrary states and $A$ is a Hermitian operator that may be expressed as
\begin{equation}\label{eq:Apaulidecomp}
A = \sum_k^{N_P} g_k P_k
\end{equation}
where $g_k$ is a real constant, $P_k$ is a Pauli string $P_k \in \{I,X,Y,Z\}^{\otimes n_q}$, $N_P$ is the number of Pauli strings in the decomposition, $n_q$ is the number of qubits, $I$ is the identity and $\{X,Y,Z\}$ are the Pauli matrices. As previously mentioned, $A$ may come from a quantum problem (such as chemistry) or may be the quantum representation of a matrix from a classical problem. In the latter case, the goal may be to determine formula \eqref{eq:transamp} in order to solve a linear system variationally \cite{bravo19_vqls,xu19_vari_linalg,endo20_vari_genproc,wang2021vqsvd,huang2021near}.

In order to study scaling behavior of local and nonlocal operators, we will consider the following two simple operators. On $n_q$ qubits, they are the operators %

\begin{equation}
A_{loc} = \sum_{i=0}^{n_q-1} X_i / \sqrt{n_q}
\end{equation}
and
\begin{equation}
\begin{split}
A_{nonloc} &= \sum_{i=0}^{n_q-1} X^{\otimes n_q} X_i  / \sqrt{n_q}\\
&= \sum_{i=0}^{n_q-1} X^{\otimes i} I_i X^{\otimes n_q-i-1}   / \sqrt{n_q}
\end{split}
\end{equation}

where $X_i$ is the Pauli-X operator on qubit $i$, $I$ is the identity, and $n_q$ is the number of qubits. Considering these two operators will allow us to study some general trends in scaling, including circuit depths and number of distinct circuit evaluations.

\subsection{Previous methods}\label{sec:prevmethods}
This subsection reviews the following previously reported methods for transition probabilities: the use of the Hadamard test \cite{knill07_hadtest,chen2021vardyncorrfuncs} to determine each individual cross-term between Pauli strings of \eqref{eq:Apaulidecomp}, a related approach that does not require controlled unitaries andinstead uses short-depth controlled Pauli strings \cite{bravo19_vqls}, and the method of Ibe \textit{et al.} \cite{ibe20_transition} that (assuming state orthogonality) eliminated the need for the Hadamard test in this context. We focus primarily on the latter method as it is similar in spirit to the approach of this work. Circuits for each method are shown in Figure \ref{fig:cross-circs}.

We begin by summarizing how to use the more expensive Hadamard test. 
We define unitaries $U_a|0\ra = |a\ra$ and $U_b|0\ra = |b\ra$, where $|0\ra$ is some reference state. Established near-term algorithms exist for determining these circuit unitaries for states of interest \cite{mcardle19_ite,ollitrault19_eom,mcclean16_njp,jones19_discovspectra,higgott19_vqd,motta19_qite,mclean17_qse,santagati18_waves}. In this work we assume that these state preparation unitaries are already known.

An obstacle in calculating quantity $\aAb$ is that $A$ is Hermitian but not unitary, meaning it cannot be trivially implemented with a unitary quantum circuit (though the longer-depth linear combination of unitaries appoach may be used \cite{childs2012lcu}). However, each individual unitary $P_k$ may be implemented up to global phase with a simple circuit \cite{mikeike11}. 
Furthermore, an expansion of equation \eqref{eq:transamp} reveals cross-terms such as $g_i g_j \la a|P_i|b \ra \la b|P_j|a \ra$, for which a naive implementation would use the Hadamard test \cite{knill07_hadtest,dobsicek07_hadtest,mitarai19_newhadamtest}.

Using the Hadamard test one may determine $\la a|P_i|b \ra$ by using a controlled state preparation unitary $cU$, where the the unitary to be controlled is $U=U_a^\dag P_i U_b$. Even though this arguably could be classified as a near-term method, its depth requirements are much larger than the original unitary. Assuming the original unitary was composed only of one- and two-qubit gates, these respectively become two- and three-qubit gates, which in turn have costly decompositions into smaller gates.
Hence a major purpose of previous studies \cite{ibe20_transition,huggins20_nonorthogonal} and of the current work is to circumvent the Hadamard test.

In variational quantum linear solving (VQLS) \cite{bravo19_vqls}, as originally proposed, an ancilla control qubit is used to implement controlled $P_k$ operators while the state preparation unitaries are implemented directly without control qubits. These controlled $P_k$ operations can be directly implemented with two-qubit gates, leading to a substantial reduction in circuit depth.

A notable advance was recently introduced by Ibe et al. \cite{ibe20_transition}. Without needing any ancilla qubits, their method uses measurements from many distinct circuits, before taking a weighted sum of each result to arrive at the desired quantity. We summarize their method here using slightly modified notation.

The algorithm of Ibe \textit{et al.} begins by calculating the following quantities on a quantum computer, each of which is in the form $|\la a|V_i|b \ra|^2$ where $V_i$ is a different unitary. We define quantities
\begin{equation}\label{eq:w1234}
\begin{split}
W_1\ep{k} = |\la a| P_k |b \ra|^2 \\
W_2\ep{kl} = |\la a| \half(I+iP_k)(I+iP_l) |b \ra|^2 \\
W_3\ep{kl} = |\la a| \half(I-iP_k)(I-iP_l) |b \ra|^2 \\
W_4\ep{kl} = |\la a| P_k P_l |b \ra|^2, \\
\end{split}
\end{equation}
each of which is a variable between 0 and 1. %

Unitaries involving $P_k$ and $P_l$ may be implemented using known circuits for exponentiating a single Pauli string \cite{mikeike11}, while the states themselves are prepared using $U_a$ and $U_b$. Calculating the overlaps squared is possible using the SWAP test \cite{buhrman01_qfingerprint} or destructive SWAP test \cite{garcia13_destrswap} between states $U_a|0\ra$ and $V U_b|0\ra$, or by implementing $U_a^\dag V U_b$ before determining the frequency of measured state $|0\ra^{\otimes n_q}$ \cite{havlek19_qml}, as shown in see Figure \ref{fig:cross-circs}(a) and (b). The former methods require $2n_q$ qubits and the latter requires $n_q$ qubits and at most a doubling of circuit depth.

The transition probability may then be reconstructed as
\begin{equation}\label{eq:aAb_sq_1}
\begin{split}
|\la a| A |b \ra|^2 = \sum_k g_k^2 W_1\ep{k} \\
+ \sum_k \sum_{l<k} g_k g_l \big [  2 W_2\ep{kl} + 2 W_3\ep{kl} - W_1\ep{k} - W_1\ep{l} - W_4\ep{kl}  \big ]
\end{split}
\end{equation}

or equivalently %
\begin{equation}\label{eq:aAb_sq_2}
\begin{split}
|\la a| A |b \ra|^2 = \\
\sum_k (g_k^2 - 2\sum_{l<k} g_k g_l) W_1\ep{k}  \\
+ \sum_k \sum_{l<k} g_k g_l \big ( 2 W_2\ep{kl} + 2 W_3\ep{kl} - W_4\ep{kl} \big ) \\
\end{split}
\end{equation}

where the equality holds only if $\la a|b \ra=0$, though below we propose a method usable for non-orthogonal states. The number of distinct circuits is $(3N_P^2-N_P)/2$.
The number of measurements required for determining each of formulas \eqref{eq:w1234} is $\mO(1/\epsilon_i^2)$, where $\epsilon_i$ is the required precision for each circuit; a full analysis of measurement counts is slightly more involved and is derived in the next subsection.

Throughout this work we assume the available gateset is CNOT and all arbitrary one-qubit unitaries. The depths will change if a different native gate set is assumed. The maximum depth (excluding the state preparation circuits) for the Ibe \textit{et al.} circuit is 
\begin{equation}\label{eq:}
D^{\rm{Ibe}} = 5 + 4(k-1) = 4k + 1
\end{equation}
where $k$ is the longest Pauli string length in the Hamiltonian. This upper bound comes from having a circuit $e^{-i\frac{\pi}{4}P_k}e^{-i\frac{\pi}{4}P_l}$, which in the worst case operate on the same set of qubits. This leads to 4 CNOT-ladders each with $k-1$ CNOT gates, two R$_z(\frac{\pi}{4})$ rotations, and 4 layers of basis changes of which the middle two layers may be combine into one layer of single qubit gates. This in turn yields $D^{\rm{Ibe}}[A_{nonloc}] = 4n_q$ and $D^{\rm{Ibe}}[A_{loc}] = 8$.

For the circuits of the original VQLS formalism \cite{bravo19_vqls}, based on controlled Pauli rotations (cPR), we consider controlled versions of the circuits of the previous paragraph, which leads to deeper circuits after decomposition. 
In this case we have the previous CNOT gates replaced with Toffoli gates, the R$_z(\frac{\pi}{4})$ gates replaced with controlled-R$_z$, and the depth-1 layer of basis change replaced by a depth-$k$ layer of controlled rotations. The maximum depth for VQLS is 
\begin{equation}
D^{\rm{VQLS}} = 4 (k-1) (11) + 4 k (4)
\end{equation}
where we use $D_{\rm{Toff}}=11$ for the depth required to decompose a Toffoli (controlled CNOT) gate and $D_{\rm{cR}}=4$ for depth of a decomposed arbitrary controlled one-qubit rotation \cite{barenco95decomp}. This leads to $D^{\rm{VQLS}}[A_{nonloc}] = 60 n_q - 104$ and (performing additional gate cancellations via inspection) $D^{\rm{VQLS}}[A_{loc}] = 8$.

\subsection{Shot (measurement) counts for previously published methods}

To our knowledge, a resource analysis for required shot counts (i.e. circuit repetitions) has not been performed for the previously published methods. Here we perform such an analysis for Ibe \textit{et al.} \cite{ibe20_transition} and for the original VQLS \cite{bravo19_vqls} formalism. 

We begin with an analysis of Ibe \textit{et al.} The number of measurements required for the full calculation of $Q=\aAb$ is dependent on the coefficients in equation \eqref{eq:aAb_sq_2}. There are $N_{W}$ circuits from which measurements need to be extracted, hence $N_W$ independent variables. Re-indexing all random variables $W$ with $i$, if the goal is an upper additive error bound of $\varepsilon_Q$, then via standard error propagation we have
\begin{equation}
\varepsilon_Q^2 = \sum_i^{N_W} \left |\frac{\partial Q}{\partial W_i} \right|^2 \text{Var}(W_i) \varepsilon_{i}^2.
\end{equation}
where $\text{Var}(W_i)$ is the variance of sampling circuit $W_i$ and $0 \leq \text{Var}(W_i) \leq 1$ \cite{crawford2021si,yen2023deterministic}. In order to obtain analytical results, for the rest of this section we will calculate upper bounds by assigning $\text{Var}(W_i)=1$.
Note that $\varepsilon_Q$ has the same units as $A^2$, while $\varepsilon_i$ is dimensionless as it is simply the error in $W_i$. %

Assigning each summed term to have the same uncertainty leads to
\begin{equation}
\left |\frac{\partial Q}{\partial W_i} \right|^2 \varepsilon_{i}^2 = \varepsilon_Q^2 / N_W
\end{equation}

which in turn yields measurement counts of each term as 
\begin{equation}
n_i = \frac{1}{\varepsilon_i^2} = \frac{N_W\left |\frac{\partial Q}{\partial W_i} \right|^2}{\varepsilon_Q^2}
\end{equation}
and total measurements (shots)
\begin{equation}
n_{tot} = \sum_i n_i
\end{equation}

where

\begin{equation}\label{eq:NW}
N_W = N_P + \frac{3}{2} (N_P-1) N_P
\end{equation}

Using equation \eqref{eq:aAb_sq_2} the general result for arbitrary operator $A=\sum_k g_k P_k$ is
\begin{equation}
n_{tot} = \frac{N_W}{\varepsilon_Q^2} \left ( \sum_k \left (2g_k^2 - \sum_{l<k} g_k g_l \right )^2 + \sum_k \sum_{l<k}(9g_k^2g_l^2) \right ).
\end{equation}

It is instructive to consider the number of measurements for $A_{loc}$ and $A_{nonloc}$, in order to obtain an understanding of basic scaling. As all $g_k$ equal unity for these two operators, the measurement counts are 
\begin{equation}\label{eq:ntot_general}
\begin{split}
n_{tot}[A_{loc/nonloc}] = \\  \frac{N_W}{\varepsilon_Q^2}( \frac{1}{6}N_P(2N_P^2-15N_P+37) + \frac{9}{2} N_P(N_P-1)) \\
= \m O(N_P^5/\varepsilon_Q^2) = \m O(n_q^5/\varepsilon_Q^2)
\end{split}
\end{equation}
where we have used $N_W = \m O(N_P^2)$ from equation \eqref{eq:NW}. The perhaps unexpected fifth-order scaling is a result of the first sum of equations \eqref{eq:aAb_sq_2} and \eqref{eq:ntot_general} having a cubic scaling in $N_P$.

In this work we concern ourselves primarily with relative error $\eta = \frac{\varepsilon}{Q}$. Because $\aAb=\|A\|^2\frac{\aAb}{\|A\|^2}$, an order of magnitude estimate for the relative error $\frac{\varepsilon}{\aAb}$ is $\frac{\varepsilon}{\|A\|^2}$. As $\|\Aloc\|^2$ and $\|\Anonloc\|^2$ scale as $\m O(n_q^2)$, with respect to relative error the total number of measurements scales as $\m O(n_q^3/\eta_Q^2)$.

Now we do a measurement resource analysis for VQLS, for which the VQLS objective function is optimized when the correct solution $|x\ra$ to system of linear equations $A|x\ra=|b\ra$ is found. Unlike the Hadamard test, this approach does not require controlled state preparation unitaries, though it does require controlled Pauli terms.

Because cross-terms are calculated explicitly, the individual terms are
\begin{equation}
\sum_k g_k^2 [|\la a|P_k|b \ra|^2] + \sum_k \sum_{l<k} g_k g_l 2 \Re [\la a|P_l|b \ra\la b|P_k|a \ra]
\end{equation}
where each term in square brackets is a distinct circuit. The number of circuits is thus $N_W= N_P+(N_P-1)N_P/2$ and the measurement count is
\begin{equation}
n_{tot}^{\rm{VQLS}} = N_W \varepsilon_Q^{-2} \left[ \sum_k g_k^4 + \sum_k \sum_{l<k} 2 g_k^2 g_l^2 \right]
\end{equation}
which in turn leads to
\begin{equation}
\begin{split}
n_{tot}^{\rm{VQLS}}[A_{loc/nonloc}] &=  N_W \varepsilon_Q^{-2} \left[ N_P + (N_P-1) N_P \right] \\
&=  N_W \varepsilon_Q^{-2} \left[ N_P + (N_P-1) N_P \right]\\
\\
&= \m O(N_P^4) = \m O(n_q^4),
\end{split}
\end{equation}
corresponding to approximately $\m O(n_q^2/\eta_Q^2)$ measurements when considering relative error.

\section{Non-orthogonal states, extrapolation, and depth tuning}

In this section we introduce three new tools for calculating transition probabilities. First we extend Ibe \textit{et al.}'s previous method to allow for calculating transition probabilities between \textit{non-orthogonal} states, while still avoiding any gates higher than 2-qubit gates. We expect this method to be useful both for calculating transition probabilities between arbitrary quantum states and for more efficiently running VQLS. Second we show how to use exponentiation and extrapolation to greatly reduce the number of measurements required, at the cost of an increase in circuit depth. Finally, we demonstrate how one may tune between higher-depth and higher-measurement, in order to use as much of the limited (due to \textit{e.g.} noise) available circuit depth as possible when calculating transition probabilities.
Table \ref{tbl:meas_depth} show a comparison of the capabilities of our method compared to previous work.

\subsection{Implementing non-orthogonal states (NOTraP-SD)}\label{sec:nonorthog}

Our first algorithmic contribution is a simple modification to the states and to the operator $A$; after this modification, any procedure that would have required orthogonal input states can be used on the modified states. First, one adds a single ancilla qubit to the state space, with opposite bit values for each state of interest,
\begin{equation}\label{eq:abxform}
\begin{split}
|a\ra \rightarrow |0\ra_a \otimes |a\ra_s &\equiv |\ac a\ra \\
|b\ra \rightarrow |1\ra_a \otimes |b\ra_s &\equiv |\g b\ra
\end{split}
\end{equation}
Next one modifies the operator $A$, multiplying it by a bit-flip operator acting on the new qubit:
\begin{equation}\label{eq:Adot}
A \rightarrow X_a \otimes A_s \equiv \dot A.
\end{equation}
The result is that, even if an algorithm for formula \eqref{eq:transamp} requires that the two input states be orthogonal, one may use these new constructs ($|\ac a\ra, |\g b\ra$, and $\dot A$) to calculate the originally desired quantity $|\la a|A|b \ra|^2$. This is proven in the following two lemmas.

\begin{lemma}\label{lem:1}
$\la \ac a | \g b \ra = 0$. 
\end{lemma}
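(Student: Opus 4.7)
The plan is to simply compute the inner product directly from the definitions in equation \eqref{eq:abxform}, exploiting the tensor-product structure. First I would write
\begin{equation}
\la \ac a | \g b \ra = \bigl( \la 0|_a \otimes \la a|_s \bigr)\bigl( |1\ra_a \otimes |b\ra_s \bigr),
\end{equation}
then factor the inner product over the two tensor factors as $\la 0|1\ra_a \cdot \la a|b\ra_s$. Since the ancilla states $|0\ra$ and $|1\ra$ are computational basis states of a single qubit, $\la 0|1\ra_a = 0$, which forces the entire product to vanish regardless of the value of $\la a|b\ra_s$.

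There is no real obstacle here: the whole point of introducing the ancilla qubit in opposite computational-basis values for the two states in equation \eqref{eq:abxform} is precisely to force orthogonality on that register. The lemma is essentially a one-line verification that the construction does what it is designed to do, and the only thing worth emphasizing is that the argument makes no assumption on $|a\ra$ and $|b\ra$, so in particular non-orthogonal $|a\ra$ and $|b\ra$ are allowed. The substantive content (that this construction also preserves the transition probability $\aAb$ when combined with $\dot A$) is presumably the job of the next lemma, not this one.
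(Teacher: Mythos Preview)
Your proposal is correct and matches the paper's own proof essentially line for line: the paper also just factors the inner product as $\la 0|1\ra \la a|b\ra$ and observes that $\la 0|1\ra = 0$. There is nothing to add.
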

\begin{proof}

\begin{equation}
\begin{split}
&\la \ac a | \g b \ra \\
&= \la 0 | 1 \ra \la a | b \ra \\
&= 0
\end{split}
\end{equation}
\end{proof}

\begin{lemma}\label{lem:2}
$\la a|A|b \ra = \la \ac a | \dot A | \g b \ra$. 
\end{lemma}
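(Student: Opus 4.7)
The plan is to prove this by a direct tensor-product factorization. Both $|\ac a\ra$ and $|\g b\ra$ are product states across the ancilla register $a$ and the system register $s$, and $\dot A = X_a \otimes A_s$ is also a tensor product, so the inner product factors cleanly into a product of inner products on each tensor factor.

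First, I would substitute the definitions from equations \eqref{eq:abxform} and \eqref{eq:Adot} into the right-hand side, writing
\begin{equation}
\la \ac a | \dot A | \g b \ra = \bigl(\la 0|_a \otimes \la a|_s\bigr)\bigl(X_a \otimes A_s\bigr)\bigl(|1\ra_a \otimes |b\ra_s\bigr).
\end{equation}
Next I would apply the standard identity $(\la\phi_1|\otimes\la\psi_1|)(U\otimes V)(|\phi_2\ra\otimes|\psi_2\ra) = \la\phi_1|U|\phi_2\ra \cdot \la\psi_1|V|\psi_2\ra$ to split this into
\begin{equation}
\la 0 | X | 1 \ra_a \cdot \la a | A | b \ra_s.
\end{equation}

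Finally, I would evaluate the ancilla factor: since $X|1\ra = |0\ra$, we have $\la 0|X|1\ra = \la 0|0\ra = 1$, which leaves exactly $\la a|A|b\ra$. The proof is essentially a one-liner once the tensor-product factorization is invoked.

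The main (and only mild) obstacle is purely notational rather than mathematical, namely keeping clear which factor lives on the ancilla register $a$ and which on the system register $s$; everything else is the standard bilinearity of the inner product over a tensor product. No additional lemmas beyond the definitions in \eqref{eq:abxform} and \eqref{eq:Adot} and the elementary action of $X$ on computational basis states are needed.
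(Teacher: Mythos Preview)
Your proposal is correct and follows essentially the same approach as the paper's own proof: substitute the definitions of $|\ac a\ra$, $|\g b\ra$, and $\dot A$, factor the tensor product, and use $\la 0|X|1\ra = 1$. The paper's version is just the same three-line computation, written slightly more tersely.
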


\begin{proof}
\begin{equation}
\begin{split}
&\la \ac a|\dot A|\g b \ra \\
&= \la 0 | \la a | (X\otimes A) |1\ra | b \ra \\
&= \la 0 | X | 1 \ra \la a | A | b \ra \\
&= \la a | A | b \ra
\end{split}
\end{equation}
\end{proof}

This simple extension allows the exact transition probabilities for any two states to be calculated using the previously introduced \cite{ibe20_transition} formula \eqref{eq:aAb_sq_1}, which would otherwise produce the incorrect results if the two input states are non-orthogonal. The additional cost is one extra qubit, as well as \textit{at most} a circuit depth increase of four two-qubit entangling gates (sometimes just 2) and two one-qubit gates; note that adjacent Hadamard gates cancel in Figure \ref{fig:cross-circs}(e). %
Thus we still avoid the long depths required of the Hadamard test, but calculate the exact desired quantity. Allowing for non-orthogonal states is important for practical applications, with examples including variational linear systems solvers as well as simulating vibronic spectra.
This orthogonalization procedure is also necessary for the subsequent methods in this work.

\subsection{Exponentiation method (NOTraP-HD)}\label{sec:expmethod}

Here we develop another algorithm for calculating transition probabilities, with the goal of reducing the required measurement counts at the cost of an increased circuit depth. The method is based on exponentiating the operator $\dot A$ and using Richardson extrapolation. (Extraploation has been to improve efficiency in quantum linear algebra \cite{vazquez2022qla_extrap}, though this previous algorithm is more amenable to longer-term hardware and is not similar to our approach.) 
We begin by writing the Taylor expansion
\begin{equation}
e^{-i\tau \dot A} = \mathcal I - i \tau \dot A - \half \tau^2 \dot A^2 + \cdots
\end{equation}

We first require the determination of $| \la\aa|\exp(-i \tau \dot A)|\bb\ra |^2$ for multiple $\tau$. This quantity can be determined by the frequency of all-$|0\ra$ measurements from implementing the circuit for $U_{\bb}^\dag \exp(-i \tau \dot A) U_{\aa} |\mathbf 0\ra$. One may alternatively use the destructive SWAP test on states $\exp(-i \tau \dot A)U_{\aa} |\mathbf 0\ra$ and $U_{\bb}|\mathbf 0\ra$, which lowers the depth while doubling the number of qubits (see Figure \ref{fig:cross-circs}).

The algorithm proceeds by determining the following function for at least two values of $\tau$, before using Richardson extrapolation \cite{richardson1927deferred,pozrikidis2008} to determine $\aAb$. The function is
\begin{equation}\label{eq:f_expan}
\begin{split}
f(\tau) = |\la \ac a | e^{-i\tau \dot A} | \g b \ra|^2 + |\la \ac a | e^{+i\tau \dot A} | \g b \ra|^2 \\
= 2\tau^2 |\la  a |  A |  b \ra|^2 + \tau^4 K_4 + \tau^6 K_6 + \mathcal O(\tau^8)
\end{split}
\end{equation}
where many terms are combined into constants $K_n$. We note two key insights here. First, the quadratic term's coefficient is proportional to $\aAb$ only because of the transformations \eqref{eq:abxform} and \eqref{eq:Adot}; otherwise the quadratic term would have been contaminated with a term $\tau^2 \la b | A^2 | a \ra \la a | b \ra$. Hence the orthogonalization procedure is strictly necessary for this extrapolation-based algorithm. Second, the cancellation of all odd orders of $\tau$ in \eqref{eq:f_expan} is due to the summing of results from $+\tau$ and $-\tau$. %

An important consideration is the choice of $\tau$ values for the extrapolation. A smaller $\tau$ leads to a lower error for a given number of extrapolation points, but to a larger total number of required measurements.
The algorithm's dependence on $\tau$ is further discussed in the next subsection.

As we show presently, it is fortunately not necessary to exactly implement the exponential in order to obtain the transition probability. This is important, as on near-term hardware the exact implementation of $e^{-i\tau \dot A}$ will often not be possible. Instead, Suzuki-Trotter decompositions may be used. The approximate exponential is then $\tilde U(\tau) \approx \exp(-i\tau \dot A) = U(\tau)$. The key is that an $n$th-order Suzuki-Trotter decomposition yields an error of $\|U(\tau)-\tilde U(\tau)\|= \m O(\tau^{n+1})$ \cite{suzuki1985decomposition}, implying that a simple 1st-order decomposition yields $\tilde U(\tau) = 1 - i\tau \dot A + \m O(\tau^2)$. 

Under such a first-order approximation, the resulting extrapolation function takes the form
\begin{equation}
\begin{split}
\tilde f(\tau) = |\la \ac a| \tilde U(\tau) |\g b \ra|^2 + |\la \ac a| \tilde U(-\tau) |\g b \ra|^2 \\= 2\tau^2 |\la a|A|b \ra|^2 + \tau^4 \tilde K_4 + \mO(\tau^6) + \cdots
\end{split}
\end{equation}
where the leading term is the same as before. 
Fortuitously, all of this implies that there is no immediately obvious reason to think an exact implementation of $\exp(-i\tau A)$ is more useful than a 1st-order Suzuki-Trotter approximation.

\subsection{Shot count analysis for NOTraP-HD}\label{sec:shotstheory}

Ultimately it is the number of circuit shots (as opposed to number of unique circuits) and the circuit depth that are most relevant in a resource analysis. In this section we analyze the number of shots (i.e. circuit repetitions) required when implementing NOTraP-HD. Here we consider only the case of
$n_\tau=2$ extrapolation points; the analysis would proceed analogously when considering $n_\tau>2$.

It is useful to express the Richardson extrapolation from equation \eqref{eq:f_expan} in the form \cite{pozrikidis2008} 

\begin{equation}
\begin{split}
T \vec c &= \vec f  \\
\begin{bmatrix}
\tau_0^2 & \tau_0^4  \\
\tau_1^2 & \tau_1^4  \\
\end{bmatrix}
\begin{bmatrix}
K_2 \\ K_4 \\
\end{bmatrix} &=
\begin{bmatrix}
f_0  \\ f_1  \\
\end{bmatrix}
\end{split}
\end{equation}
where $f(\tau_i)=f_i$ and the goal of the extrapolation is to find $K_2/2=Q' \approx \aAb$. We define $V=T^{-1}$, for which the analytical expression is
\begin{equation}
\begin{pmatrix}
\tau_0^2 & \tau_0^4 \\
\tau_1^2 & \tau_1^4
\end{pmatrix}
^{-1}
= V =
\frac{1}{\tau_0^2-\tau_1^2}
\begin{pmatrix}
-\tau_1^2/\tau_0^2 & \tau_0^2/\tau_1^2 \\
1/\tau_0^2 & -1/\tau_1^2
\end{pmatrix}
\end{equation}

Notably, each $f(\tau)$ value requires two distinct circuit evaluations, each of which will be associated with an uncertainty due to a finite number of circuit shots. Decomposing $f(\tau)$ by defining
\begin{equation}
\begin{split}
f(\tau_i) &= f_i^+ + f_i^- \\
f_i^- &= |\la \ac a | e^{-i\tau_i \dot A} | \g b \ra|^2 \\
f_i^+ &= |\la \ac a | e^{+i\tau_i \dot A} | \g b \ra|^2
\end{split}
\end{equation}
leads to
\begin{equation}\label{eq:Qestimator}
Q' = \half K_2 = \half \left [  V_{00} f_0^+ + V_{00} f_0^- + V_{01} f_1^+ + V_{01} f_1^-  \right ] .
\end{equation}
Hence (for $n_\tau=2$) we have four circuit evaluations in determining the transition probability. %

There are two types of error in NOTraP-HD. First is measurement uncertainty $\varepsilon_{meas}$, the result of a finite number of measurements when evaluating $f(\tau)$. The second is extrapolation error $\varepsilon_{extrap}$,
\begin{equation}
\varepsilon_{extrap} = Q - Q' = \aAb - \half K_2,
\end{equation}
resulting from the Taylor approximation. $\varepsilon_{extrap}$ can be made arbitrarily small by decreasing $\tau$.

The measurement uncertainty derived from equation \eqref{eq:Qestimator} is
\begin{equation}\label{eq:Qmeasuncert}
\varepsilon_{meas}^2 = \frac{1}{4} \left [ 
V_{00}^2 \varepsilon_{f_0^+}^2 + 
V_{00}^2 \varepsilon_{f_0^-}^2 + 
V_{01}^2 \varepsilon_{f_1^+}^2 + 
V_{01}^2 \varepsilon_{f_1^-}^2 \right ].
\end{equation}

The expression for the measurement uncertainty $\varepsilon_f$ in each $f_i^{\pm}$ is 
\begin{equation}
\varepsilon_{f_i^{\pm}}^2 = \text{Var}[f_i^{\pm}]/N_{f_i^{\pm}}
\end{equation}
where $N_{f_i^{\pm}}$ is the number of circuit shots for $f_i^{\pm}$. The number of required shots for a circuit $f_i^{\pm}$ will tend to be higher when its coefficient in equation \eqref{eq:Qestimator} is larger. Hence when determining $N_{tot}=\sum N_{f_i^{\pm}}$ in the results of Section \ref{sec:sims} we use $N_{f_i^{\pm}} \propto V_{0i}^2$, which is analogous to the strategy used in the context of calculating expectation values \cite{crawford2021si,yen2023deterministic}. 

Finally we discuss the choice of $\tau$ values with the goal of reducing circuit shots, where $\tau_1>\tau_0$. The choice of $\tau_1$ is arbitrary, but it should be as large as possible in order to reduce required shots, while being small enough to ensure that extrapolation error is sufficiently small. Now, because equation \eqref{eq:Qmeasuncert} implies that larger $V_{0i}$ leads to more required shots, we would like $V_{0i}$ as small as possible. As $V_{00}>V_{01}$, a reasonable strategy is to choose $\tau_0$ such that $V_{00}=\frac{\tau_1^2}{\tau_0^2(\tau_1^2-\tau_0^2)}$ is minimized. This minimum occurs at $\tau_0=\tau_1/\sqrt{2}$, which is the value we use for our numerical simulations in Section \ref{sec:results}.

\subsection{Tuning between low depth and few circuit evaluations (NOTraP-T)}\label{sec:interpmethod}

\begin{figure}
    \centering
    \includegraphics[width=.5\textwidth]{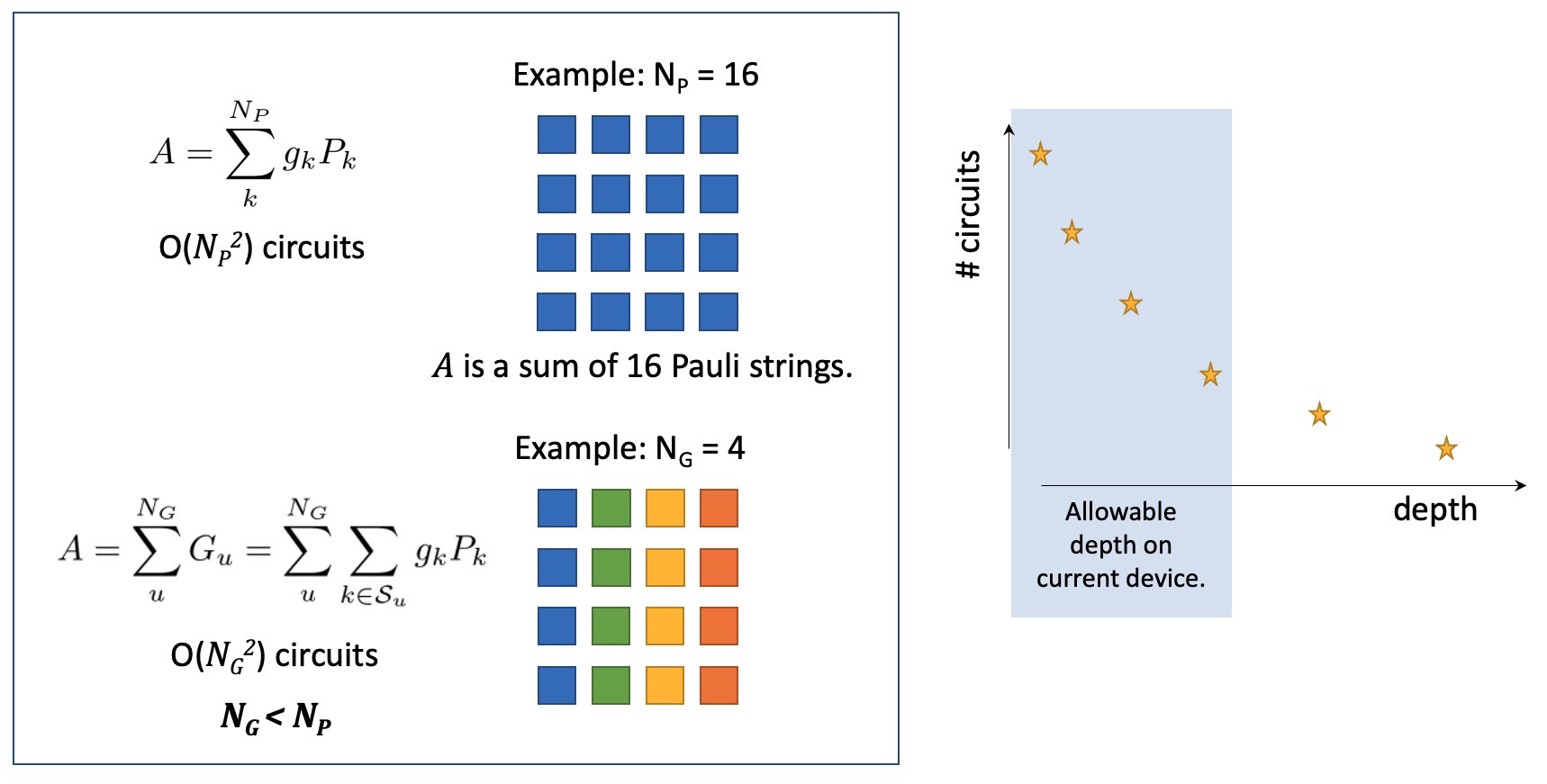}
    \caption{Schematic of the tunable NOTraP-T algorithm.}
    \label{fig:concept}
\end{figure}

Different near- and medium-term quantum hardware will have different characteristics---and whatever hardware one uses, one generally would like to take advantage of as much of the circuit depth as possible. This suggests the need for algorithms whereby the circuit depth may be tuned over a wide range, at the cost of another resource like (in this case) total circuit evaluations (see Figure \ref{fig:concept}).

One of the above methods (\notrap-SD) uses shorter-depth circuits to implement transition probabilities, while the other (\notrap-HD) requires deeper circuits but fewer measurements. In this section we introduce a method by which one may interpolate between the short-depth NOTraP-SD and the high-depth NOTraP-HD, which allows for a more fine-tuned trade-off between depth and total circuit evaluations.

Expressing $A$ as a linear combination of $N_P$ Pauli strings, the idea is to first group $A$ into $N_G$ terms instead, with $N_G<N_P$. Write down
\begin{equation}\label{eq:interp_decomp}
A = \sum_u^{N_G} G_u = \sum_u^{N_G} \sum_{k \in \mS_u} g_k P_k 
\end{equation}
where each $G_u$ is a linear combination of Pauli strings, $\mS_u$ $\subset$ $\{0,\cdots,N_P-1\}$, $\bigcup_u \mS_u = \{0,\cdots,N_P-1\}$, and $\sum_u|\mS_u|=N_P$.

One may choose any arbitrary decomposition for equation \eqref{eq:interp_decomp}; there is no requirement, for example, that members of the set commute. It may appear as though one would want the members of every $\{g_k | k \in \mS_u\}$ to commute, as that would allow one to exactly implement $e^{-i \theta G_u}$ via first-order Trotterization. However, because it is only fourth-order ($\tau^4$) terms and above that are affected by first-order Trotter error (see Section \ref{sec:expmethod}) in equation \eqref{eq:f_expan}, there is no immediately obvious advantage to having Pauli strings commute. The total number of distinct circuits to run and measure is then reduced to $\mO(N_G^2)$, from the previous $\mO(N_P^2)$.

Though distinct, our route to the derivation is somewhat inspired by formulas \eqref{eq:w1234} and \eqref{eq:aAb_sq_1}. Define the terms
\begin{equation}
\begin{split}
S_{+}\ep{u} &= |\la \ac a|e^{-i\tau \dot G_u}|\g b \ra|^2, \\
\end{split}
\end{equation}
\begin{equation}
\begin{split}
S_{-}\ep{u} &= |\la \ac a|e^{+i\tau \dot G_u}|\g b \ra|^2, \\
\end{split}
\end{equation}

\begin{equation}
\begin{split}
S_-\ep{uv} &= |\la \ac a|e^{-i\tau \dot G_u}e^{-i\tau \dot G_v}|\g b \ra|^2 \\
&= |\la \ac a|[1 - i\tau \dot G_u + \mO(\tau^2)][1 - i\tau \dot G_v + \mO(\tau^2)]|\g b \ra|^2 ,
\end{split}
\end{equation}
\begin{equation}
\begin{split}S_+\ep{uv} &= |\la \ac a|e^{+i\tau \dot G_u}e^{+i\tau \dot G_v}|\g b \ra|^2 \\
&= |\la \ac a|[1 + i\tau \dot G_u + \mO(\tau^2)][1 + i\tau \dot G_v + \mO(\tau^2)]|\g b \ra|^2. \\
\end{split}
\end{equation}

For a single $G_u$,
\begin{equation}
\begin{split}
S_{+}\ep{u} + S_{-}\ep{u} = 2\tau^2 |\la a|G_u|b \ra|^2 + \mO(\tau^4).  \\
\end{split}
\end{equation}

It is instructive to first consider the case of $N_G=2$.
We write 
\begin{equation}
\begin{split}
S_+\ep{uv} + S_-\ep{uv}  \\
&= 4\tau^2 \Re \la a|G_u|b \ra\la b|G_v|a \ra \\
&+ 2\tau^2 |\la a|G_u|b \ra|^2 \\
&+ 2\tau^2 |\la a|G_v|b \ra|^2 + \mO(\tau^4)
\end{split}
\end{equation}
which upon rearrangement (when $N_G=2$) yields 
$|\la a|A|b \ra|^2 = \frac{1}{2\tau^2}(S_+\ep{uv} + S_-\ep{uv}) + \mO(\tau^4)$. 

Turning to the general case of $N_G \geq 2$, it can be shown that
\begin{equation}
\begin{split}
|\la a|A|b \ra|^2 = \\
\sum_u^{N_G} (S_+\ep{u}+S_-\ep{u})/2\tau^2 + 
\sum_u^{N_G} \sum_{v < u} \big[ S_+\ep{uv} + S_-\ep{uv} \\
- S_+\ep{u} - S_+\ep{v} - S_-\ep{u} - S_-\ep{v} + \big]/2\tau^2 + \mO(\tau^4)/\tau^2  \\
\end{split}
\end{equation}

or

\begin{equation}\label{eq:aAb_interp}
\begin{split}
|\la a|A|b \ra|^2 = |\la a|\sum_u G_u|b \ra|^2 \\
= \sum_u^{N_G} \sum_{u<v} \big( S_+\ep{uv} + S_-\ep{uv} \big)/2\tau^2 \\-  \sum_{u}\big( N_G - 2 \big)(S_+\ep{u}+S_-\ep{u})/2\tau^2 \\
+ \mO(\tau^2).
\end{split}
\end{equation}

This method (NOTraP-T) is tunable via changing the number of subsets $N_G$. As before, the orthogonalization procedure of Lemmas \ref{lem:1} and \ref{lem:2} is strictly necessary for the above expressions to yield the correct result. Note that the main mathematical way in which NOTraP-T differs from NOTraP-SD is that the former formulas do not explicitly use coefficients of the Pauli matrices. This difference is what leads to the $\frac{1}{2\tau^2}$ factor.

Finally, the algorithm proceeds by evaluating formula \eqref{eq:aAb_interp} for multiple values of $\tau$ and then extrapolating. %

There are $N_G$ circuits each for types $S\ep{u}_+$ and $S\ep{u}_-$, and $\half(N_G^2-N_G)$ circuits each for types $S\ep{uv}_+$ and $S\ep{uv}_-$. Hence the number of circuits required to estimate the transition probability  is $n_{\tau}(N_G^2+N_G)$ where $n_{\tau}$ is the number of points used in the extrapolations. As mentioned in the previous section, the relationship between choice of $\tau$ and the overall total number of measurements is nontrivial, and we leave a detailed analysis of this relationship in the tunable NOTraP-T to future work.

In this section we have provided an algorithm with tunable circuit depth, where the availability of more circuit depth allows one to evaluate expectation values from fewer circuits. The tunability comes in the choice of the number of groups $N_G$ in which to place the Pauli terms of operator $A$. This approach is conceptually a mix between the low-depth method of Section \ref{sec:nonorthog} and the high-depth method of Section \ref{sec:expmethod}, allowing the user to make use of the circuit depth available on a particular quantum device.

\section{Applications and numerical simulations}\label{sec:sims}

Here we implement numerical proofs-of-principle for calculating transition probabilities via the NOTraP algorithms. Our primary focus is on the high-depth NOTraP-HD method, whose viability we wish to demonstrate using very few extrapolation points. Additionally, we study the tradeoffs between circuit depth and the number of circuit evaluations for all algorithms introduced in this work. %
We consider simple models from three application areas: spin chains, vibronic transitions in molecules, and linear systems solving. We prepared and executed the numerics using primarily Scipy \cite{scipy}, mat2qubit \cite{mat2qubit,sawaya2022dqir}, and OpenFermion \cite{openfermion}. In Section \ref{sec:methods} we summarize the applications and numerical methods and in Section \ref{sec:results} we present and discuss the results.

\subsection{Applications and numerical methods}\label{sec:methods}

\begin{figure}
    \centering
    \includegraphics[width=.5\textwidth]{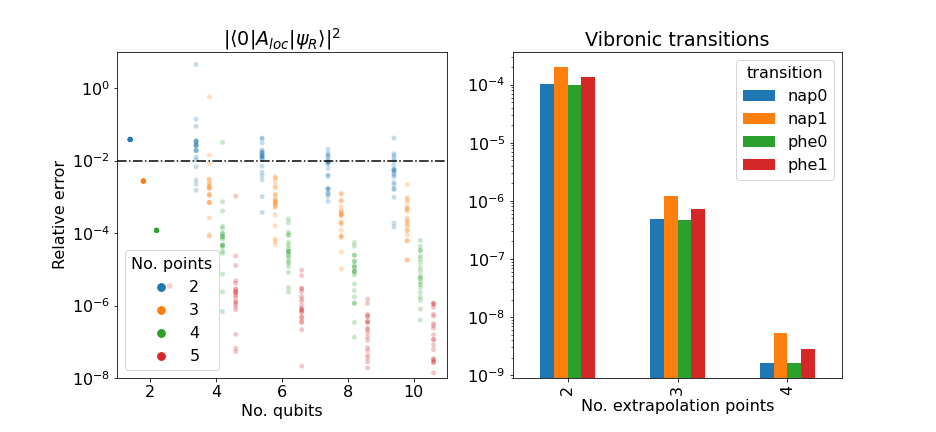}
    \caption{Errors via the exponentiation- and extrapolation-based method NOTraP-HD using $n_\tau$ values of 2 through 5, for $A_{loc}$ and for selected vibronic transitions in napthalene and phenanthrene. ($A_{nonloc}$ yields virtually identical results as $A_{loc}$.) The labels nap\{$N$\} and phe\{$N$\} refer to transitions from the ground state of napthalene and phenanthrene, respectively, to the $N$th vibrational excited state of the electronic excited state. %
    }
    \label{fig:extrap-Aloc-vibron}
\end{figure}

\begin{figure}
    \centering
    \includegraphics[width=.5\textwidth]{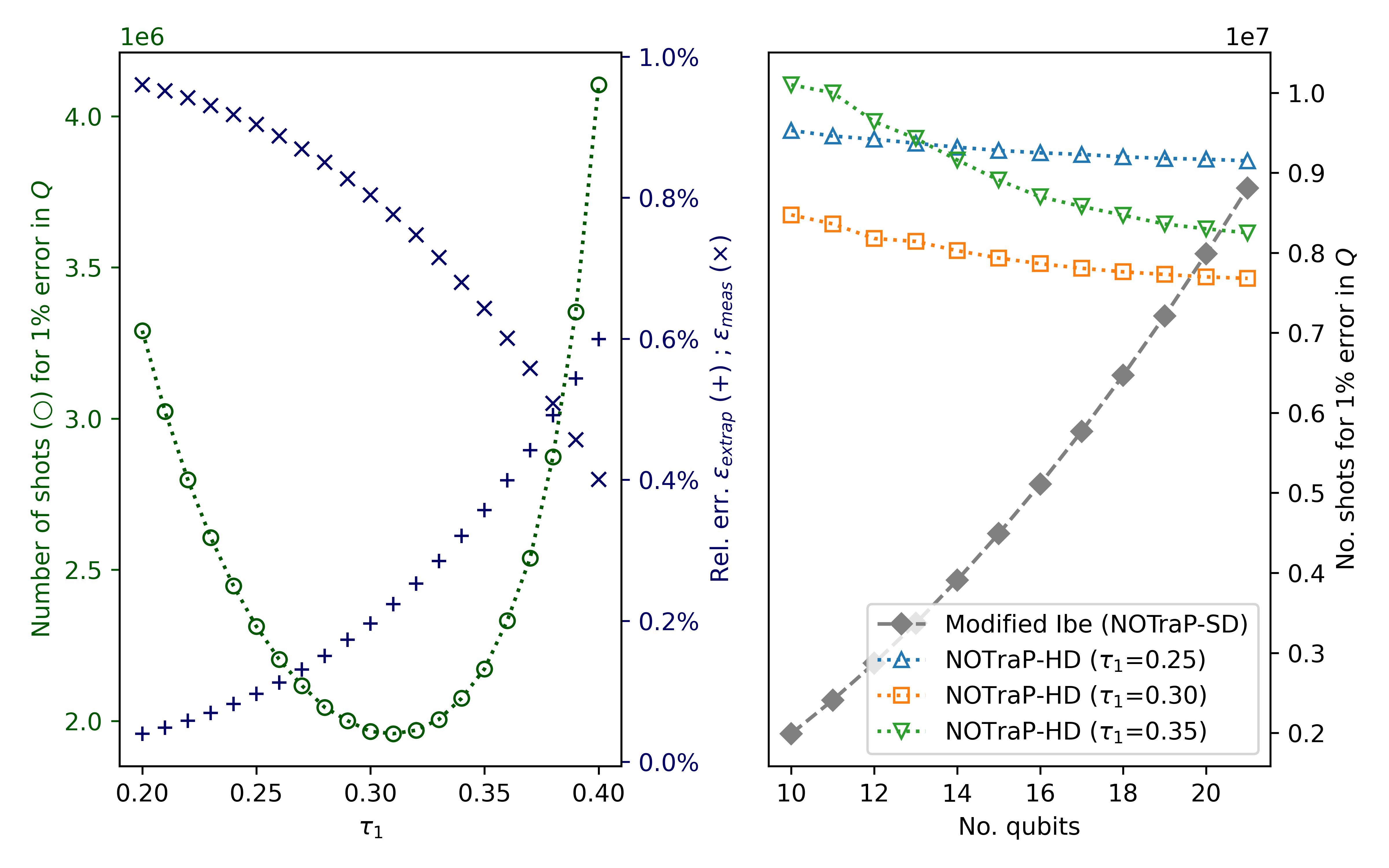}
    \caption{Number of measurement shots (circuit repetitions) required in modified Ibe (NOTraP-SD) and our extrapolation method (NOTrap-HD) with respect to operator $A_{nonloc}$, using $n_\tau=2$. \textbf{Left}: Number of shots required with respect to $\tau_1$, for $n_q=16$ qubits. These are the shot counts (green circles) required to achieve $\varepsilon_{extrap} + \varepsilon_{meas}<1\%$, where $\varepsilon_{extrap}$ is the extrapolation error (blue `+') and $\varepsilon_{meas}$ is the measurement uncertainty (blue `$\times$'). \textbf{Right}: Number of shots required with respect to qubit count. NOTraP-HD requires fewer shots than modified Ibe (NOTraP-SD) after approximately 20 qubits.
    }
    \label{fig:shots}
\end{figure}

\begin{figure}
    \centering
    \includegraphics[width=.5\textwidth]{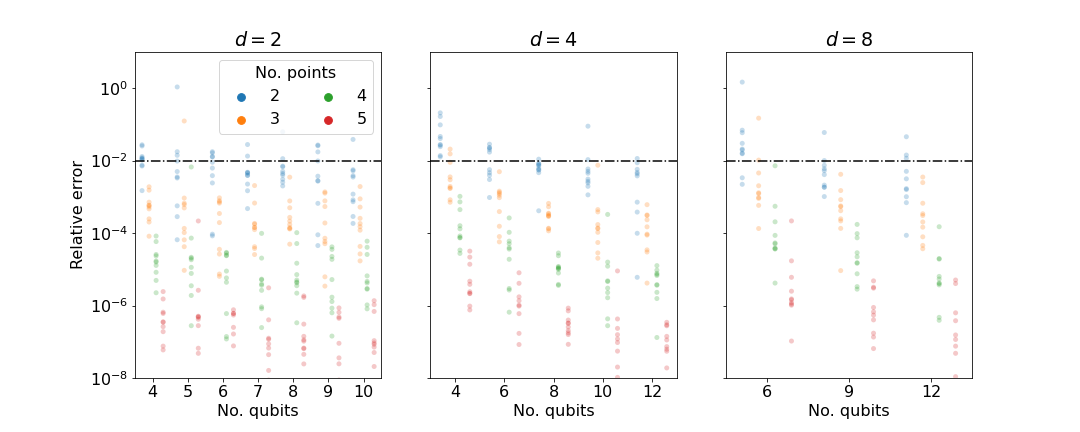}
    \caption{Using NOTraP-HD of Section \ref{sec:expmethod} to estimate $|\la x|A|b \ra|^2$ for 20 random instances using $A$ as defined in equation \eqref{eq:ttrain}, relevant to classical linear algebra. The number of extrapolation points $n_\tau$ is varied from 2 to 5.}
    \label{fig:extrap-linalg}
\end{figure}

\textbf{Toy spin model.} 
We implement both the one-local operator $A_{loc}$ and the ($N_q-1$)-local operator $A_{nonloc}$ defined in Section \ref{sec:theory}. While $A_{nonloc}$ is a contrived operator introduced mainly to study a highly non-local toy case, $A_{loc}$ has a straight-forward physical interpretation as the total magnetism in the transverse direction. For up to ten qubits, we use the extrapolated algorithm NOTraP-HD to calculate $|\la 0^{\otimes N_q} |A_{loc}| \psi_R \ra|^2$ where $|\psi_R \ra$ is a random quantum state. The extrapolation points $\tau$ were arbitrarily chosen to be centered arond $\frac{1}{\|A\|}$ with spacing between points of $\frac{0.1}{\|A\|}$, where $\|A\|$ is the spectral norm. 
In analyzing NOTraP-HD, we analyze the circuit depth versus distinct quantum circuits required for both $A_{loc}$ and $A_{nonloc}$. As before, we assume a native gate set of CNOT and all arbitrary one-qubit gates. For \notrap-T, we group the Pauli terms into $N_G$ equally sized sets.

\textbf{Vibronic transitions in molecules. }
In the general case, vibrational degrees of freedom in molecules are hard to simulate on classical computers; there has been theoretical work towards developing general quantum approaches for treating vibrations \cite{mcardle19_qvibr,ollitrault20_reiher_qvibr,sawaya2021ir}. The more specific problem of calculating vibronic spectra involves the calculation of light-absorption cross sections for coupled vibrational-electronic transitions \cite{huh15,huh17_manhong,sawaya19_vibronic,jahangiri2020xanaduvibronic}; here, the goal is to determine both the frequencies and the absorption intensities of each transition. The transition occurs between two electronic eigenstates, each associated with a different vibrational Hamiltonian. Though highly accurate vibrational Hamiltonians must include anharmonic terms, here we approximate napthalene and phenanthrene using harmonic potential energy surfaces. The transformation between the ground and excited vibrational normal mode coordinates is described by the Duschinsky transformation,
\begin{equation}
\vec{q'} = \mathbf S \vec q + \vec d
\end{equation}
where $q'$ and $q'$ are vibrational normal mode coordinates respectively for the excited and ground electronic potential energy surface (PES), $\mathbf S$ is a unitary matrix, and $d$ is a displacement vector. The transition intensity is described by $|\la \psi|\mu|\psi' \ra|^2$, where $|\psi\ra$ is a vibrational eigenstate of the ground PES and $|\psi'\ra$ is a vibrational eigenstate of the excited PES. When the so-called Condon approximation is used, $\mu=I$ and one simply calculates the overlap of the two states. Non-Condon effects must be included in order to qualitatively capture the correct spectrum in many molecules, including those considered here. An analog photonic-based algorithm for the inclusion of non-Condon effects has been previously described \cite{jnane2021noncondon}; the current work can be used as a digitial version.

We use the following parameters\cite{jnane2021noncondon} for napthalene,
\begin{equation}
\begin{split}
S\supr{\rm{nap}} = \begin{pmatrix}
.98 & -.20 \\
.20 & .98
\end{pmatrix}
\\
\omega_g\supr{\rm{nap}} = (509,938)
\\
\omega_e\supr{\rm{nap}} = (438,912)
\\
d\supr{\rm{nap}} = (0,0)
\\
\mu\supr{\rm{nap}} = I + q_0 - q_1,
\end{split}
\end{equation}
and the following for phenanthrene,
\begin{equation}
\begin{split}
S\supr{\rm{phen}} = \begin{pmatrix}
.9055 & -.4240 \\
.4240 & .9055
\end{pmatrix}
\\
\omega_g\supr{\rm{phen}} = (700,800)
\\
\omega_e\supr{\rm{phen}} = (679,796)
\\
d\supr{\rm{phen}} = (.1650,.0780)
\\
\mu\supr{\rm{phen}} = I + 1.5 q_0 - 0.5 q_1.
\end{split}
\end{equation}

In both problems, we consider only the two normal modes most relevant to the spectra. We map the problem to a binary representation \cite{sawaya20_dlev,sawaya2020connectivity} using a truncation of 16 energy levels, leading to 4 qubits per mode and hence 8 qubits overall. We deliberately chose molecules for which vibrational eigenstates on different PESs are not orthogonal, in order to highlight the utility of NOTraP.

\textbf{Linear systems.} 
A ubiquitous problem in science and engineering is solving the linear system
\begin{equation}\label{eq:Ax_b}
A|x\ra = |b\ra
\end{equation}
where $A$ and $|b\ra$ are known (or in the quantum case, it is at least known how to prepare $|b\ra$). The first quantum algorithm for solving linear systems \cite{harrow2009_hhl} uses the quantum phase estimation algorithm. Variational approaches to solve \eqref{eq:Ax_b} have been developed \cite{bravo19_vqls,xu19_vari_linalg,endo20_vari_genproc,wang2021vqsvd,huang2021near}, based on the notion that one may vary $|x\ra$ using the objective function %
\begin{equation}\label{eq:vqls_obj}
F_{LS} = \la x|A|b \ra .
\end{equation}
or $\|F_{LS}\|^2$, where the correct answer is reached when this quantity is maximized. Note that the original VQLS paper uses a slightly modified definition of the cost function that does not qualitatively affect the results of the current work.

Equation \ref{eq:vqls_obj} is simply a transition amplitude, and so any method for calculating either transition amplitudes or transition probabilities may be used. The methods of the current work may be used to solve for $|x\ra$, using $|\la x|A|b \ra|^2$ as an objective function instead of equation \eqref{eq:vqls_obj}. %
Note that the original \cite{ibe20_transition} formulas \eqref{eq:aAb_sq_1}  would not work for optimizing the objective function because they require $|b\ra$ to be orthogonal to the candidate for $|x\ra$.

In order to study the use of our subroutines in variational quantum linear systems (VQLS) solving, we consider random classical matrices that can be expressed as \textit{tensor trains} \cite{bigoni2016tensortrain,novikov2020tensortrain}. Commonly used in machine learning and other fields, tensor train decompositions are a linear combination of matrix tensor products. We use the simple structure
\begin{equation}\label{eq:ttrain}
A = \sum_{i=1}^{N_{\text{train}} \text{-} 1} R^{(i)}_{i} \otimes R^{(i')}_{i+1}
\end{equation}
where the size of each local subspace is $d$, hence each ``local'' matrix $R$ is of size $d \times d$. The subscripts denote the position in the tensor network train (identities $I$ are implicit on other subspaces), and we include superscripts to clarify that all $R$ are unique. We consider local tensor sizes $d=2,4,8$, corresponding to local qubit counts of 1, 2, and 3, respectively. The number of qubits required to represent each full classical matrix $A$ is thus $N_{train} \times  \log_2 d$. We consider the error in calculating $|\la b|A|x \ra|^2$, where $|b\ra$ is the zero vector, for 20 random vectors $|x\ra$. Though we constrain $A$ to be Hermitian, it is known that any non-Hermitian matrix can be made Hermitian using one additional qubit, as shown in Appendix \ref{apx:nonherm}.

\subsection{Results and Discussion}\label{sec:results}

\begin{figure}
    \centering
    \includegraphics[width=\linewidth]{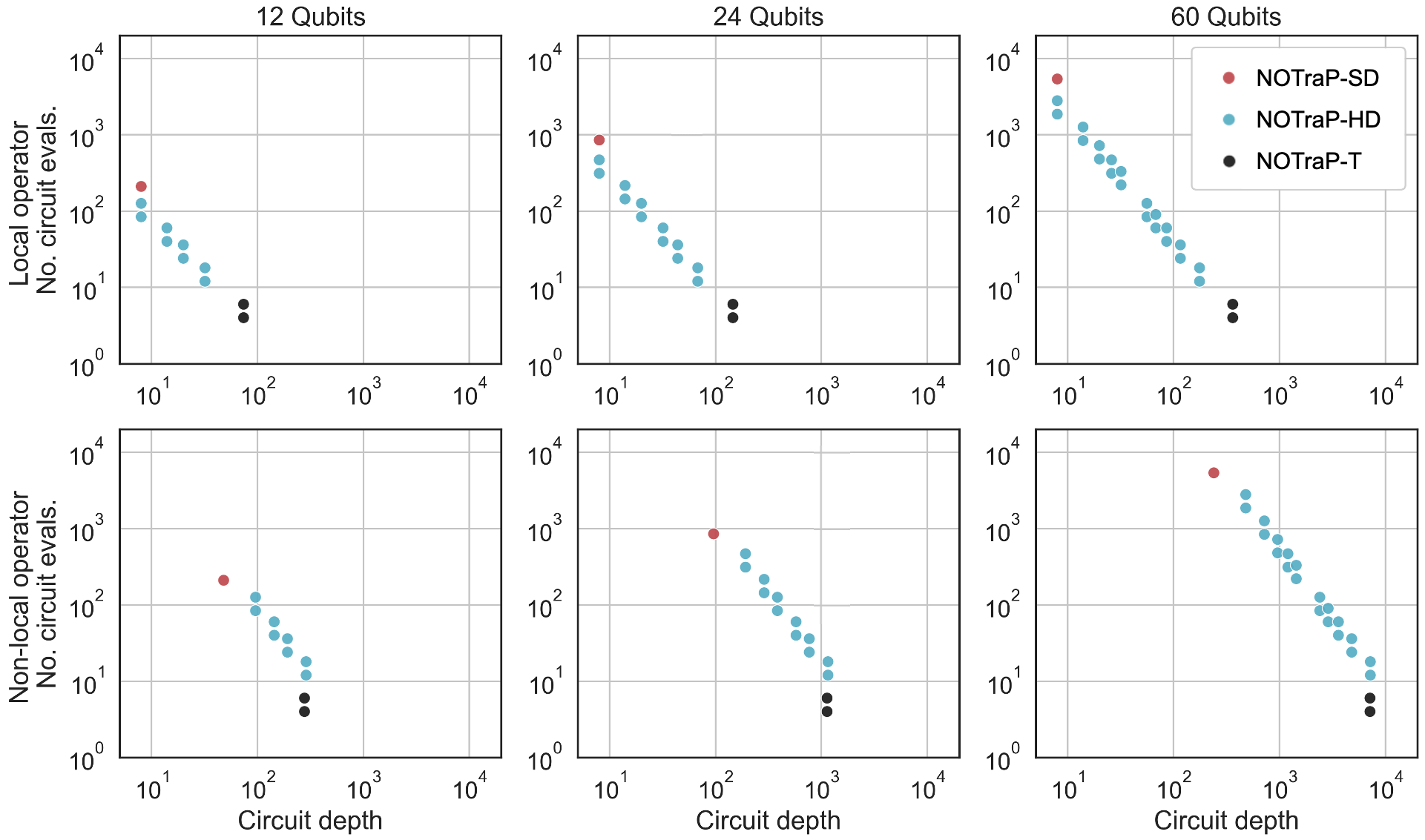}
    \caption{Pareto fronts demonstrating the trade-off between circuit depth and distinct quantum circuits to evaluate, for the short-depth \notrap-SD, the tunable NOTraP-T, the high-depth \notrap-HD. Circuit depths from the method of reference \cite{bravo19_vqls} would be similar to \notrap-SD results. Two operators, the local $A_{loc}$ and the non-local $A_{nonloc}$, are considered. (Reported circuit depths do not include state preparation unitaries, whose depths vary greatly depending on the problem instance. The reason we do not compare against the Hadamard test is that the method would require controlled state preparation unitaries, leading to an overall decomposed circuit depth much larger than NOTraP. The Hadamard test is virtually guaranteed to produce circuits much deeper than these)}
    \label{fig:pareto}
\end{figure}

The left panel of Figure \ref{fig:extrap-Aloc-vibron} shows the relative error of using \notrap-HD for the spin operator $A_{loc}$, for several qubits counts and values of $n_\tau$. The right panel of Figure \ref{fig:extrap-Aloc-vibron} shows vibronic transitions discussed in the previous section. In both applications, we observe a trend of the error decreasing exponentially with the number of extrapolation points used (for a similar range of $\tau$), demonstrating the viability of \notrap-HD for even very few extrapolation points. It is also notable that just 2 extrapolation points is often sufficient for a sub-$1\%$ error for these choices of $\tau$, while 3 or more extrapolation points is almost always sufficient. In the case of $A_{loc}$, the average error scaling shows an improvement as the number of qubits increases; we do not assume this is a general result though.

Next we studied the number of shots required to yield 1\% error, using the operator $A_{nonloc}$. Figure \ref{fig:shots} shows shot counts for $\| \langle a | A_{nonloc} | b \rangle\|^2$, with states chosen such that $\| \langle a | A_{nonloc} | b \rangle\|^2 = 0.5$, where we set $\tau_0=\tau_1/\sqrt{2}$ (see Section \ref{sec:shotstheory}). The left panel shows the number of shots required to achieve measurement error $\varepsilon_{meas}$ such that $\varepsilon_{extrap} + \varepsilon_{meas}=1\%$. Importantly, as $\tau_1$ increases so does $\varepsilon_{extrap}$. In turn, this leads to a smaller allowable $\varepsilon_{meas}$ and hence more required shots as $\tau_1$ increases. Conversely, a smaller $\tau_1$ leads to larger values of $V_{0i}$, which leads to more shots. This trade-off leads to a ``sweet spot'' for $\tau_1$ that is evident in the left panel, where the minimum number of shots is achieved at $\tau_1 \approx 0.31$.

The right panel of Figure \ref{fig:shots} shows modified Ibe (NOTrap-SD) as well as NOTrap-HD with $\tau_1 \in \{0.25,0.30,0.35\}$, for 10 through 21 qubits. For modified Ibe we set $\varepsilon_{meas}=1\%$ since unlike NOTraP-HD there is no extrapolation error; as before, for NOTraP-HD the number of shots ensures that $\varepsilon_{meas}$ = $1\%-\varepsilon_{extrap}$. After approximately 20 qubits, NOTraP-HD requires fewer total shots than modified Ibe (NOTraP-SD). This shows a clear example where the extrapolation algorithm out-performs modified Ibe in terms of shot counts.

The results show that unlike modified Ibe, the number of shots for NOTraP-HD is not directly related to the number of Pauli terms $N_P$. This is unsurprising, because in some sense the Hamiltonian is taken into account ``all at once'' in $f(\tau_i)$, whereas in Ibe's method all terms and cross-terms require separate circuits. In these shot count analyses we have calculated the extrapolation error explicitly---in future work it should be possible to developing analytical approaches to upper bound this error, as in a real implementation one would not a priori know the extrapolation error. %

Figure \ref{fig:extrap-linalg} shows relative error in calculating $|\la b|A|x \ra|^2$ for a classical $A$ defined by formula \eqref{eq:ttrain}. We observe similar trends as before: errors decrease exponentially with increasing $n_\tau$ for a roughly constant range of $\tau$, and there is a modest trend of error reduction as the qubit count increases. The value of the local tensor size $d$ does not appear to have a strong effect on error.

Figure \ref{fig:pareto} shows Pareto fronts with respect to circuit depth and number of circuit evaluations for several methods, with the purpose of comparing resource counts and especially demonstrating the resource trade-offs for the tunable \notrap-T method. We consider both $A_{loc}$ and $A_{nonloc}$; the latter leads to larger circuit depths as the operator is highly non-local. \notrap-T and \notrap-HD show two values per circuit depth, as we include $n_\tau$ values of both 2 and 3. We stress again that these plots show circuit depths only for the calculation of $\aAb$ (depths of the required state preparation unitaries are excluded) and that the relationship between number of circuit evaluations and total number of measurements is non-trivial.

The plots show that \notrap-T may be continuously tuned between high circuit depth and high number of circuit evaluations. As previously stated, this allows for the \notrap~methods to be tailored to the particular limited depth requirements of a particular quantum device. 
These circuit depths are based on the gate decompositions discussed in Section \ref{sec:prevmethods}, and highlight the fact that \notrap~allows for shorter depth circuits than previous methods. Note that once the available circuit depth is larger than the depth required for \notrap-HD, there is no benefit to using \notrap-T.

\section{Conclusions}

In this work we made three contributions to quantum algorithms for calculating transition probabilities. First, building on important previous work \cite{ibe20_transition} we introduced a method (\notrap-SD) that can be used to calculate transition probabilities between two non-orthogonal states, while still (like another recently proposed method \cite{bravo19_vqls}) not requiring any controlled state preparations nor controlled Pauli rotations. Second, we introduced an extrapolation-based method (\notrap-HD) that needs enough circuit depth to implement a first-order Suzuki-Trotter circuit but requires asymptotically far fewer measurements. Third, we show how to tune (\notrap-T) between these short- and high-depth methods in order to make the most of available quantum hardware. %
Note that in principle this method might be used in other advanced methods like shadow tomography \cite{huang2020shadow}, which for highly local Hamiltonians may show large reductions in measurements required per circuit evaluation. %

There are several other promising directions for future work. One can envision using methods here for calculating linear combinations of higher-order operators $\la A^K \ra$ \cite{seki21_power}. Further, there may be modifications to our methods that allow for fewer measurement counts when calculating expectation values \cite{mcclean16_njp,verteletskyi2020measurement,cerezo2021variational,fedorov2022vqerev} as opposed to transition probabilities. 

In conclusion, we expect these quantum subroutines to be useful on near- and mid-term quantum computers for a range of problems in chemistry, materials, condensed matter physics, and quantum linear algebra, especially when measurement shots and/or circuit depth are the limiting resources.

\section*{Acknowledgements}

We are grateful for useful correspondence with Albert Schmitz and Ra\'ul Garc\'ia-Patr\'on. 
JH acknowledges the support of Basic Science Research Program through the National Research Foundation of Korea (NRF), funded by the Ministry of Education, Science and Technology (NRF-2021M3E4A1038308, NRF-2021M3H3A1038085, NRF-2019M3E4A1079666, NRF-2022M3H3A106307411, NRF-2023M3K5A1094805, 2023M3K5A1094813).

\appendix

\section{non-Hermitian matrices}\label{apx:nonherm}

If $A$ is not Hermitian, for example if it was lifted from a classical matrix problem instance, it can be made Hermitian by 
\begin{equation}
A \rightarrow \begin{bmatrix}
0 & A \\
A^\dag & 0 \\
\end{bmatrix}
\end{equation}
which is done by first decomposing the operator into Hermitian and anti-Hermitian parts,
\begin{equation}
A = A_H + A_{AH}
\end{equation}

where $A_H=(A+A^\dag)/2$ and $A_{AH} = (A-A^\dag)/2$. Adding one qubit and the following transformation yields a Hermitian operator:

\begin{equation}
A \rightarrow X \otimes A_H + Y \otimes (-i) A_{AH}.
\end{equation}

State vectors are modified as
\begin{equation}
\begin{split}
| b\ra &\rightarrow |0\ra \otimes |b\ra \\
| x\ra &\rightarrow |1\ra \otimes |x\ra.
\end{split}
\end{equation}

\bibliography{main}%

\end{document}